\documentclass[12pt]{article}
\usepackage{amsmath}
\usepackage{graphicx,psfrag,epsf}
\usepackage{enumerate}
\usepackage{url} 

%\pdfminorversion=4
% NOTE: To produce blinded version, replace "0" with "1" below.
\newcommand{\blind}{0}

\addtolength{\oddsidemargin}{-.5in}%
\addtolength{\evensidemargin}{-.5in}%
\addtolength{\textwidth}{1in}%
\addtolength{\textheight}{1.3in}%
\addtolength{\topmargin}{-.8in}%

%%% added packages/commands
\usepackage{amsthm,amsfonts}
\usepackage{comment}
\usepackage[bf]{caption}
\usepackage{tikz}
\usetikzlibrary{positioning,shapes.geometric}
\usepackage{lscape}
\usepackage[backend=bibtex, style=nature, citestyle=authoryear]{biblatex}
\bibliography{bibliography}

\def\ind{\perp\!\!\!\perp}

\newcommand{\var}{\mathbb{V}\text{ar}}
\newcommand{\avar}{\text{a}\mathbb{V}\text{ar}}
\newcommand{\cov}{\mathbb{C}\text{ov}}

\newcommand{\Pb}{\mathbb{P}}
\newcommand{\Pn}{\mathbb{P}_n}
\newcommand{\E}{\mathbb{E}}

\newcommand{\bX}{\mathbf{W}}

\newcommand{\bO}{\mathbf{O}}

\DeclareSymbolFont{bbold}{U}{bbold}{m}{n}
\DeclareSymbolFontAlphabet{\mathbbold}{bbold}
\newcommand{\one}{\mathbbold{1}}
\newtheorem{theorem}{Theorem}
\newtheorem{lemma}{Lemma}

\theoremstyle{remark}
\newtheorem{assumption}{Assumption}

\newtheorem{remark}{Remark}

\begin{document}

\def\spacingset#1{\renewcommand{\baselinestretch}%
{#1}\small\normalsize} \spacingset{1}

%%%%%%%%%%%%%%%%%%%%%%%%%%%%%%%%%%%%%%%%%%%

\if0\blind
{
  \title{ \bf Paradoxes in instrumental variable studies \mbox{with missing data and one-sided noncompliance} }
  \author{ \\ Edward H. Kennedy
    \thanks{Edward H. Kennedy is Assistant Professor in the Department of Statistics, Carnegie Mellon University, Pittsburgh, PA 15213 (e-mail: edward@stat.cmu.edu), and Dylan S. Small is Professor in the Department of Statistics, The Wharton School, University of Pennsylvania. }\hspace{.2cm}
    \\
    Department of Statistics, Carnegie Mellon University \\ \\
    Dylan S. Small \\ Department of Statistics, The Wharton School, University of Pennsylvania \\ \\ 
    }
  \maketitle
  \setcounter{page}{0}
  \thispagestyle{empty}
} \fi

\if1\blind
{
  \vspace*{.8in}
  \begin{center}
    {\LARGE\bf  }
\end{center}
  \setcounter{page}{0}
  \medskip
} \fi

\begin{abstract}
It is common in instrumental variable studies for instrument values to be missing, for example when the instrument is a genetic test in Mendelian randomization studies. In this paper we discuss two apparent paradoxes that arise in so-called single consent designs where there is one-sided noncompliance, i.e., where unencouraged units cannot access treatment. The first paradox is that, even under a missing completely at random assumption, a complete-case analysis is biased when knowledge of one-sided noncompliance is taken into account; this is not the case when such information is disregarded. This occurs because incorporating information about one-sided noncompliance induces a dependence between the missingness and treatment. The second paradox is that, although incorporating such information does not lead to efficiency gains without missing data, the story is different when instrument values are missing: there, incorporating such information changes the efficiency bound, allowing possible efficiency gains. This is because some of the missing values can be filled in, based on the fact that anyone who received treatment must have been encouraged by the instrument (since the unencouraged cannot access treatment).
\end{abstract}

\noindent%
{\it Keywords:} causal inference, efficiency theory, encouragement design, instrumental variable, observational study.
\vfill

\thispagestyle{empty}

\newpage

\spacingset{1}

\section{Introduction}

Instrumental variable methods are a popular approach to causal inference in settings where unmeasured variables confound the relationship between the treatment and outcome of interest. In general, such unmeasured confounding precludes identification of causal effects, and one must resort to  bounds and/or sensitivity analysis. However, in the presence of an instrument some progress can still be made. An instrument is a special variable that affects receipt of treatment but does not directly affect outcomes, and is itself unconfounded. This setup can be represented graphically as in Figure \ref{fig:dag}; formal identifying assumptions are given in the next section. 

\begin{figure}[h!]
\begin{center}
\begin{tikzpicture}[->, shorten >=2pt,>=stealth, node distance=1cm, noname/.style={ ellipse, minimum width=5em, minimum height=3em, draw } ]
\node[] (1) {$Z$};
\node (2) [right= of 1] {$A$};
\node (3) [right= of 2] {$Y$};
\node (5) [below= of 2] {$U$};
\path (1) edge node {} (2);
\path (1) edge [bend left=50pt, dashed,lightgray] node {} (3);
\path (2) edge node {} (3);
\path (5) edge [dashed,lightgray] node {} (1);
\path (5) edge node {} (3);
\path (5) edge node {} (2);
\end{tikzpicture}
\end{center}
\caption{Directed acyclic graph showing instrument $Z$, treatment $A$, outcome $Y$, and unmeasured variables $U$. Gray dotted arrows indicate relationships that are assumed absent by identifying assumptions. \label{fig:dag}}
\end{figure}
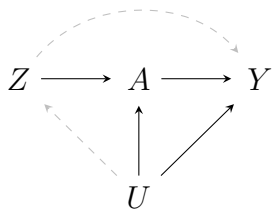

A classic example of an instrument occurs in randomized trials with noncompliance, in which case the assigned treatment is often a reasonable instrument for effects of the treatment that was actually received. Instrumental variable methods have been around for nearly a century \autocite{wright1928tariff, wright1934method}, but their placement in a formal potential outcomes framework occurred only relatively recently \autocite{angrist1996identification}. Examples abound in the literature, including instruments based on distance, treatment provider preference, calendar time, and genetic variants; we refer to  \textcite{hernan2006instruments, baiocchi2014instrumental} for overviews.

Although there is an extensive literature on instrumental variables, the treatment of missing data in such settings is scant. Apart from a few recent exceptions \autocite{burgess2011missing, mogstad2012instrumental, chaudhuri2016gmm, kennedy2018efficient} this problem has not received much attention, despite the fact that missing data is very common in instrumental variable studies. For example, Mendelian randomization studies use genetic variants as instruments, but this information is frequently missing due to subjects not sending in samples, or ambiguous output from genotyping platforms. \textcite{burgess2011missing} reported missingness in SNP-based instruments ranging between 2\% and 11\%; more examples can be found in \textcite{mogstad2012instrumental, chaudhuri2016gmm}. In this paper we consider instrumental variable studies with instrument missingness, and also where noncompliance is one-sided, i.e., where unencouraged units cannot access treatment. One-sided noncompliance common in practice, occurring for example in studies of new drugs not yet on market, and of limited-access job training programs. Although one-sided noncompliance is often associated with experiments, it also occurs in observational studies where missing instrument values are common \autocite{frolich2013identification, kennedy2018efficient}. 

For example, missing instrument data and one-sided noncompliance often arise together in fuzzy regression discontinuity designs; these are instrumental variable analyses where the instrument is an indicator for being above a treatment-influencing threshold \autocite{hahn2001identification, imbens2008regression}. For instance, \textcite{pitt1998impact} and \textcite{frolich2013identification} discuss a study of the effects of microcredit programs in Bangladesh, where the instrument was an indicator for owning more than half an acre of land; those with more than half an acre were ineligible for the program, while those with less self-selected in. \textcite{pitt1998impact} reported a ``substantial number'' of missing instrument values. \textcite{battistin2008ineligibles} give several similar examples, including studies where the instrument is an indicator for being above a test score cutoff. Students below the cutoff may be restricted from enrollment (e.g., in college programs), whereas students above get to choose whether to participate; and test scores are often missing. \textcite{angrist2015wanna} used such an instrument to study the effect of education at Boston's selective ``exam schools''; they mentioned excluding any subjects with missing test scores.%, but did not report how many. 

In this paper we discuss two paradoxes that arise in these kinds of instrumental variable studies with missing instrument values and one-sided noncompliance.  The first paradox is that, even under a missing completely at random assumption, a complete-case analysis is biased when knowledge of one-sided noncompliance is taken into account. Surprisingly, this is not the case when such information is disregarded: so discarding information avoids bias. The second paradox is that, although incorporating information about one-sided noncompliance does not lead to efficiency gains without missing data, the story is different when instrument values are missing: there, incorporating such information changes the efficiency bound, allowing possible efficiency gains. Before describing these paradoxes we first present some general efficiency theory for instrumental variable studies with missing instrument values.

\section{Setup \& notation}

Suppose the full data would consist of a sample of observations $\bO^* = (Z,A,Y)$ with $Z$ a binary instrument (e.g., a randomization indicator in experimental settings), $A$ a binary treatment, and $Y$ some real-valued outcome. Starting in Section 4, we consider studies with one-sided noncompliance, in which $Z=0$ implies $A=0$, i.e., control subjects cannot access treatment. Such studies are sometimes referred to as ``single consent designs'' \autocite{zelen1979new}.

Unfortunately, however, we do not observe the full data; instead we observe a sample of independent and identically distributed observations $(\bO_1, ..., \bO_n)$, where
$$ \bO=(RZ,R,A,Y) \sim \Pb $$ 
and $R$ is an indicator of whether the instrument $Z$ is observed or not. When $R=1$ the instrument $Z$ is observed, but when $R=0$ the instrument $Z$ is missing and we only see $RZ=0$ regardless of the value of $Z$. To focus ideas and simplify notation, we do not consider baseline covariates; however extensions are mostly straightforward  (alternatively, all results can be viewed as implicitly conditional on any such covariates).

Our goal is to estimate the classical instrumental variable estimand 
\begin{equation} \label{eq:psi_full}
\Psi^f  \equiv \Psi^f(\Pb) = \frac{\E(Y \mid Z=1) - \E(Y \mid Z=0)}{\E(A \mid Z=1) - \E(A \mid Z=0)} , 
\end{equation}
which equals a treatment effect under standard causal assumptions. (Note that expectations $\E=\E_\Pb$ are under the true $\Pb$, and likewise for $\Psi^f=\Psi^f(\Pb)$ unless stated otherwise). These assumptions have been detailed at length elsewhere \autocite{angrist1996identification, hernan2006instruments}, so we will limit our discussion of them before focusing on the observed data parameter $\Psi^f$ above. We also note that all subsequent results hold for the observed data parameter $\Psi^f$, regardless of whether the causal assumptions are plausible or not.

We let $Y_a$ denote the potential outcome \autocite{rubin1974estimating} that would have been observed had treatment been set to $A=a$; thus the goal is to learn about the distribution of the effect $Y_{a=1}-Y_{a=0}$. We will also need to define potential outcomes under interventions on the instrument; thus let  $Y_{za}$ denote the potential outcome that would have been observed under both $Z=z$ and $A=a$, and similarly let $A_z$ and $Y_z=Y_{zA_z}$ denote the potential treatment and outcome under only $Z=z$. 

To illustrate one set of assumptions under which the observed data parameter $\Psi^f(\Pb)$ represents a causal effect, consider the following.

\begin{assumption}[Consistency] $Y=Y_z$ and $A=A_z$ if $Z=z$, and $Y=Y_{za}$ if $(Z,A)=(z,a)$.
\end{assumption}
\begin{assumption}[Positivity] $0< \Pb(Z=z)< 1$.
\end{assumption}
\begin{assumption}[Instrumentation] $\Pb(A_{z=1} =1) \neq \Pb(A_{z=0}=1)$.
\end{assumption}
\begin{assumption}[Unconfoundedness of $Z$] $Z \ind (A_z, Y_z)$.
\end{assumption}
\begin{assumption}[Exclusion Restriction] $Y_{za}=Y_a$.
\end{assumption}
\begin{assumption}[Monotonicity] $A_{z=1} \geq A_{z=0}$.
\end{assumption}

Consistency means we get to observe potential outcomes (and treatments) under the observed instrument values; this requires there to be no interference, i.e., one subject's treatments and outcomes cannot be affected by other subjects' instrument values or treatments. This would be violated in for example vaccine studies with herd immunity, or studies with strong network structure. Positivity requires everyone to have some chance at each instrument value. Instrumentation means the instrument has to have some effect on treatment; in other words, the arrow from $Z$ to $A$ in Figure \ref{fig:dag} must be present. Unconfoundedness means the instrument must be assigned essentially at random, i.e., the arrow from $U$ to $Z$ in Figure \ref{fig:dag} must be absent. The exclusion restriction says that the instrument can only affect outcomes indirectly through treatment, i.e., the arrow from $Z$ to $Y$ in Figure \ref{fig:dag} must be absent. Monotonicity means there are no ``defiers'' who take treatment when \textit{not} encouraged to by the instrument, but take control when encouraged towards treatment. Note that monotonicity holds by design with one-sided noncompliance, since then $A_{z=0}=0$ so we have $A_{z=1}>A_{z=0}$ if and only if $A_{z=1}=1$. 

The next lemma recalls a result from \textcite{imbens1994identification} showing that under the above assumptions, $\Psi^f$ equals a ``local'' average treatment effect among compliers, i.e., among those who take treatment only when encouraged to do so by the instrument.

\begin{lemma}
Under Assumptions 1--6, the average treatment effect among compliers with $A_{z=1}>A_{z=0}$ is given by
$$ \Psi^f = \E(Y_{a=1} - Y_{a=0} \mid A_{z=1}>A_{z=0}) $$
with $\Psi^f$ defined in \eqref{eq:psi_full}.
\end{lemma}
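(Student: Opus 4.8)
The plan is to reduce both the numerator and denominator of $\Psi^f$ to expectations of potential-outcome contrasts, and then to partition the population into principal strata defined by the joint values of $(A_{z=0},A_{z=1})$. First I would rewrite the conditional means in terms of potential outcomes: by Consistency (Assumption 1) we have $Y=Y_z$ and $A=A_z$ on the event $\{Z=z\}$, and by Unconfoundedness (Assumption 4) the instrument is independent of $(A_z,Y_z)$, so that $\E(Y\mid Z=z)=\E(Y_z)$ and $\E(A\mid Z=z)=\E(A_z)$. Positivity (Assumption 2) guarantees these conditional expectations are well defined. Hence the numerator equals $\E(Y_{z=1}-Y_{z=0})$ and the denominator equals $\E(A_{z=1}-A_{z=0})$.

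Next I would invoke the Exclusion Restriction (Assumption 5), $Y_{za}=Y_a$, together with $Y_z=Y_{zA_z}$, to conclude $Y_z=Y_{A_z}$; thus $Y_{z=1}-Y_{z=0}=Y_{A_{z=1}}-Y_{A_{z=0}}$. The crucial step is then the stratification. Monotonicity (Assumption 6) rules out defiers, leaving three types: always-takers with $A_{z=0}=A_{z=1}=1$, never-takers with $A_{z=0}=A_{z=1}=0$, and compliers with $A_{z=1}>A_{z=0}$, i.e.\ $(A_{z=0},A_{z=1})=(0,1)$. For always-takers and never-takers $A_{z=1}=A_{z=0}$, so $Y_{A_{z=1}}-Y_{A_{z=0}}=0$, whereas for compliers the contrast is exactly $Y_{a=1}-Y_{a=0}$. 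Conditioning on type and using the law of total expectation, the numerator collapses to
$$ \E(Y_{a=1}-Y_{a=0}\mid A_{z=1}>A_{z=0})\,\Pb(A_{z=1}>A_{z=0}), $$
since the always-taker and never-taker terms vanish.

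For the denominator the same stratification gives $A_{z=1}-A_{z=0}\in\{0,1\}$ under monotonicity, equal to one precisely for compliers, so $\E(A_{z=1}-A_{z=0})=\Pb(A_{z=1}>A_{z=0})$; Instrumentation (Assumption 3) ensures this complier probability is nonzero, so the ratio is well defined. Dividing, the complier probability cancels and yields $\Psi^f=\E(Y_{a=1}-Y_{a=0}\mid A_{z=1}>A_{z=0})$, as claimed.

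I expect the main obstacle to be making the stratification argument fully rigorous: one must carefully justify that the exclusion restriction lets the summand $Y_{A_{z=1}}-Y_{A_{z=0}}$ be expressed as a function only of the type, and that it is precisely monotonicity which eliminates the defier stratum. Without monotonicity the defier contrast $Y_{a=0}-Y_{a=1}$ would enter with the opposite sign, and the clean cancellation producing a single complier effect would fail. The remaining manipulations are routine applications of linearity of expectation and the law of total expectation.
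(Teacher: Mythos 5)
Your proposal is correct and follows essentially the same route as the paper: reduce numerator and denominator to $\E(Y_{z=1}-Y_{z=0})$ and $\E(A_{z=1}-A_{z=0})$ via consistency, positivity, and unconfoundedness, apply the exclusion restriction to write $Y_z=Y_{A_z}$, and use monotonicity to collapse the contrast to the complier stratum before dividing (with instrumentation guaranteeing a nonzero denominator). Your explicit enumeration of the principal strata is just a more verbose rendering of the paper's single identity $\E(Y_{a=A_{z=1}}-Y_{a=A_{z=0}})=\E\{(Y_{a=1}-Y_{a=0})\one(A_{z=1}>A_{z=0})\}$.
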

\begin{proof}
This result follows from \textcite{imbens1994identification, angrist1996identification}. Note
\begin{align}
\E(Y \mid Z=1) - \E(Y \mid Z=0) &= \E(Y_{z=1} - Y_{z=0}) = \E(Y_{a=A_{z=1}} - Y_{a=A_{z=0}}) \nonumber \\
&= \E\{(Y_{a=1} - Y_{a=0}) \one(A_{z=1} > A_{z=0})\} \nonumber \\
&= \E(Y_{a=1} - Y_{a=0} \mid A_{z=1} > A_{z=0}) \Pb(A_{z=1} > A_{z=0}) \label{eq:ivnum}
\end{align}
where the first equality follows from consistency, positivity, and unconfoundedness, the second by consistency and the exclusion restriction, and the third by monotonicity. Now 
\begin{align} \label{eq:ivden}
\Pb(A_{z=1} > A_{z=0})  &= \E(A_{z=1} - A_{z=0}) = \E(A \mid Z=1) - \E(A \mid Z=0) 
\end{align}
where the first equality follows by monotonicity, and the second by consistency, positivity, and unconfoundedness. Finally, to obtain the result we divide \eqref{eq:ivnum} by \eqref{eq:ivden}, which requires the instrumentation assumption.
\end{proof}

We note that the observed data quantity $\Psi^f$ can also represent a causal effect under alternative assumptions. For example, monotonicity is sometimes replaced with effect homogeneity assumptions, in which case $\Psi^f$ can equal the average effect in the entire population \autocite{angrist1996identification} or among the treated \autocite{hernan2006instruments}. 

\section{Preliminaries}

In this section we present some general efficiency theory for instrumental variable studies (not necessarily with one-sided noncompliance) with missing instrument values. In particular we first present the nonparametric efficiency bounds for $\Psi^f$ without any missingness, then go on to present the efficiency bound under missing at random assumptions, describe the corresponding efficiency loss, and then analyze the efficiency of a complete-case estimator under a missing completely at random assumption.

\subsection{Efficiency bound without missingness}

With full data (i.e., when $R=1$ with probability one and $Z$ is always observed), the usual instrumental variable estimator is
\begin{equation} \label{eq:psihat}
 \Psi^f(\Pn) %= \frac{\widehat\E(Y \mid Z=1) - \widehat\E(Y \mid Z=0)}{\widehat\E(A \mid Z=1) - \widehat\E(A \mid Z=0)} 
= \frac{\Pn(ZY) - \Pn(Z)\Pn(Y)}{\Pn(ZA) - \Pn(Z) \Pn(A) } 
\end{equation}
where $\Pn$ is the empirical measure, so that sample averages can be written using the shorthand $n^{-1} \sum_i f(\bO_i) = \Pn\{f(\bO)\}$. It is straightforward to see that this estimator solves the equation $ \Pn\{ (Z-\widehat\pi_n) (Y-\psi A) \} = 0$ in $\psi$ for $\widehat\pi_n=\Pn(Z)$, so that using standard estimating equation results we have
$$ \Psi^f(\Pn) - \Psi^f = \Pn \{ D(\bO) \} + o_\Pb(1/\sqrt{n})  $$
where 
\begin{equation} \label{eq:eif_full}
D(\bO) = \frac{ \{ Z-\E(Z) \}\{ (Y - \Psi^f A) - \E(Y-\Psi^f A)\} }{ \cov(Z,A)} . 
\end{equation}

The next result gives the efficiency bound in this full data setting.

\begin{lemma}
Suppose the instrument $Z$ is always observed. Then the nonparametric efficiency bound for $\Psi^f$ is given by
$$ \var\left[ \frac{ \{ Z-\E(Z) \}\{ (Y - \Psi^f A) - \E(Y-\Psi^f A)\} }{ \cov(Z,A)} \right] . $$
\end{lemma}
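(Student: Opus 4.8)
The plan is to recognize this as a standard semiparametric efficiency calculation. The full-data model places no restriction on $\Pb$ beyond the positivity and instrumentation needed for $\Psi^f$ to be well defined, so it is nonparametric and its tangent space at $\Pb$ is the entire space $L_2^0(\Pb)$ of mean-zero, finite-variance functions of the full data $\bO^*=(Z,A,Y)$. In such a model the nonparametric efficiency bound equals the variance of the efficient influence function, and the efficient influence function is the \emph{unique} gradient of the functional $\Psi^f$ --- the representer $\phi$ satisfying $\frac{d}{dt}\Psi^f(\Pb_t)\big|_{t=0}=\E\{\phi(\bO^*)S(\bO^*)\}$ for every regular parametric submodel $\{\Pb_t\}$ with score $S$. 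So the task reduces to computing this pathwise derivative, identifying its representer, and checking that it coincides with $D$ in \eqref{eq:eif_full}; because any gradient automatically lies in $\mathcal{T}=L_2^0(\Pb)$, no projection step is needed.

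To carry out the derivative I would write $\Psi^f=\cov(Z,Y)/\cov(Z,A)$ and use the usual influence-function calculus. The gradient of a mean $\E(g)$ is $g-\E(g)$, and the product rule gives the gradient of each covariance:
\begin{align*}
\phi_N &= \{Z-\E(Z)\}\{Y-\E(Y)\}-\cov(Z,Y), \\
\phi_{\mathrm{den}} &= \{Z-\E(Z)\}\{A-\E(A)\}-\cov(Z,A).
\end{align*}
Applying the quotient rule then yields $\phi=\{\phi_N-\Psi^f\phi_{\mathrm{den}}\}/\cov(Z,A)$. The two centering constants combine into $\{\cov(Z,Y)-\Psi^f\cov(Z,A)\}/\cov(Z,A)$, which vanishes by the definition of $\Psi^f$; what remains is
$$ \phi=\frac{\{Z-\E(Z)\}\big[\{Y-\E(Y)\}-\Psi^f\{A-\E(A)\}\big]}{\cov(Z,A)}, $$
and rewriting the bracket as $(Y-\Psi^f A)-\E(Y-\Psi^f A)$ shows $\phi=D$ exactly.

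It then remains only to conclude. Since $D$ has mean zero --- indeed $\E(D)=\{\cov(Z,Y)-\Psi^f\cov(Z,A)\}/\cov(Z,A)=0$ --- it lies in the tangent space, so it is the efficient influence function and the bound is $\var(D)$. As a cross-check, the display preceding the lemma already shows that the usual estimator $\Psi^f(\Pn)$ is regular and asymptotically linear with influence function $D$; in a nonparametric model every influence function is a gradient and the gradient is unique, so this independently confirms that $D$ is efficient and that its variance attains the bound.

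I expect the only real obstacle to be bookkeeping: making sure the constant centering terms cancel correctly --- which hinges entirely on $\Psi^f=\cov(Z,Y)/\cov(Z,A)$ --- and invoking the standard regularity conditions (bounded scores, finite moments) that justify differentiating along the submodels and interchanging limit and expectation. The conceptual core, that a nonparametric tangent space forces the unique gradient to be efficient, is immediate once the derivative has been computed.
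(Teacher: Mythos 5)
Your proposal is correct and rests on exactly the same idea as the paper's proof: the full-data model is nonparametric, so the tangent space is all of $L_2^0(\Pb)$, the gradient is unique, and hence $D$ is the efficient influence function with $\var(D)$ the bound. The only (immaterial) difference is how $D$ is identified as a gradient --- you compute the pathwise derivative of $\cov(Z,Y)/\cov(Z,A)$ explicitly, while the paper reads it off from the estimating-equation expansion of $\Psi^f(\Pn)$ displayed just before the lemma, a cross-check you also note.
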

\begin{proof}
Since the full data model is nonparametric, the tangent space is equal to the entire Hilbert space of mean-zero finite-variance functions \autocite{bickel1993efficient, van2003unified, tsiatis2006semiparametric}; therefore $D$ is the only influence function and necessarily the efficient one. This immediately implies that its variance is the efficiency bound.
\end{proof}

\subsection{Efficiency bound under MCAR \& MAR}

Now consider the case where $Z$ can be missing, with $R$ an indicator for whether $Z$ is observed, and assume the ``missing completely at random'' (MCAR) condition
$$ R \ind (Z,A,Y) . $$
This means the missingness in $Z$ is completely unrelated to not only the underlying $Z$ values, but also treatment and outcome. 

The next result gives the efficiency bound in this missing at random setting. In Section 4.1 we use this result to construct an estimator whose asymptotic variance matches the bound under weak conditions. 

\begin{lemma} \label{lem:effbd_mar}
Suppose $R \ind (Z,A,Y)$ so that the instrument $Z$ is missing completely at random. Then the efficiency bound for $\Psi^f$ is given by the variance of 
\begin{equation} \label{eq:eif_mar}
D^*(\bO) = \frac{R}{\E(R \mid A,Y)} \Big[ D(\bO) - \E\{ D(\bO) \mid A, Y, R=1 \}  \Big] + \E\{ D(\bO) \mid A, Y, R=1\} ,
\end{equation}
which can be expressed as 
\begin{equation} \label{eq:bd_mar}
\var\{ D^*(\bO)\} = \frac{\var\{D(\bO)\}}{\E(R)} - \frac{\Pb(R=0)}{\Pb(R=1)} \var\Big[ \E\{ D(\bO) \mid A, Y \}  \Big] . 
\end{equation}
Further, the efficiency bound is the same under the weaker missing at random condition $R \ind Z \mid A,Y$.
\end{lemma}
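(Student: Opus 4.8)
The plan is to apply the standard semiparametric efficiency theory for missing-data (coarsening-at-random) models, exploiting that the full-data model for $\bO^*=(Z,A,Y)$ is nonparametric and hence has the unique influence function $D$ of \eqref{eq:eif_full}. Throughout I write $\pi=\pi(A,Y)=\E(R\mid A,Y)$ and $\bar D=\E\{D\mid A,Y,R=1\}$, noting that under the MAR condition $R\ind Z\mid A,Y$ one has $\bar D=\E\{D\mid A,Y\}$, since conditioning on $R=1$ does not change the law of $Z$ given $(A,Y)$.

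First I would characterize the observed-data tangent space $\mathcal{T}$. The observed-data density factors into the marginal law of $(A,Y)$, the missingness mechanism $R\mid A,Y$, and the full-data conditional $Z\mid A,Y$, the last entering the likelihood only through the factor present when $R=1$. Differentiating along parametric submodels then shows that $\mathcal{T}$ is a direct sum of three mutually orthogonal subspaces of $L^2(\Pb)$: mean-zero functions $S(A,Y)$; missingness scores $\{R-\pi\}g(A,Y)$; and activated full-data scores $R\,h(Z,A,Y)$ with $\E\{h\mid A,Y\}=0$. These facts are standard \autocite{bickel1993efficient, van2003unified, tsiatis2006semiparametric}; the care required is in establishing the orthogonality and the $R$-factor on the last piece.

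Next I would confirm that $D^*$ is the efficient influence function by checking that it is a valid observed-data gradient for $\Psi^f$ and that it lies in $\mathcal{T}$. The key step is the rearrangement $D^*=(R/\pi)(D-\bar D)+\bar D$. The first summand is $R$ times a function with conditional mean zero given $(A,Y)$, so it lies in the activated-score subspace, while $\bar D$ is a mean-zero function of $(A,Y)$ and so lies in the first subspace; hence $D^*\in\mathcal{T}$. A gradient that lies in the tangent space is the canonical gradient, so $D^*$ is the efficient influence function; note in particular that it has no component in the missingness subspace.

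I would then derive \eqref{eq:bd_mar}. Because the two summands lie in orthogonal subspaces and each has mean zero, $\var(D^*)=\E\{(R/\pi^{2})(D-\bar D)^{2}\}+\var(\bar D)$; using $R^{2}=R$ and $\E(R\mid Z,A,Y)=\pi$, the first term equals $\E\{(D-\bar D)^{2}/\pi\}$, which under MCAR (where $\pi\equiv\E(R)$ is constant) is $\E\{(D-\bar D)^{2}\}/\E(R)$, and the law of total variance gives $\E\{(D-\bar D)^{2}\}=\var(D)-\var(\bar D)$; collecting terms yields \eqref{eq:bd_mar}. For the final MAR-invariance claim I would note that weakening MCAR to $R\ind Z\mid A,Y$ only enlarges the missingness subspace while leaving the other two unchanged; since $D^*$ is already orthogonal to the entire missingness subspace and still lies in the enlarged $\mathcal{T}$, it remains the efficient influence function and the bound is unchanged. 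The main obstacle is setting up the coarsening-at-random tangent-space geometry correctly — the orthogonal decomposition and the verification that $D^*\in\mathcal{T}$ — after which the variance algebra is routine.
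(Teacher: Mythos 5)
Your proposal is correct, and its concluding variance algebra coincides with the paper's own computation of \eqref{eq:bd_mar}. The difference lies in how the efficient influence function and the MCAR/MAR equivalence are established: the paper simply invokes the missing-data efficiency theory of \textcite{robins1995semiparametric} to read off $D^*$ in \eqref{eq:eif_mar} and to assert that the bound is unchanged under the weaker condition $R \ind Z \mid A,Y$, whereas you re-derive that theory for this two-level coarsening from first principles --- factoring the observed-data likelihood into the law of $(A,Y)$, the missingness mechanism, and the $R$-activated conditional of $Z$ given $(A,Y)$, obtaining the three mutually orthogonal score subspaces, and identifying $D^*$ as the gradient lying in the tangent space. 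Your argument for the MAR-invariance is in fact more transparent than a bare citation: passing from MCAR to MAR only enlarges the missingness-score subspace $\{(R-\pi)g(A,Y)\}$, to which $D^*$ is already orthogonal, so the canonical gradient and hence the bound are unchanged. The one step you announce but do not carry out is the verification that $D^*$ is a gradient at all; it is a one-line check --- $\E\{D^*\mid Z,A,Y\}=D(\bO)$ under MAR, and since the full-data model is nonparametric $D$ is the unique full-data influence function --- but it should be written down, because membership in the tangent space alone does not make $D^*$ an influence function for $\Psi^f$. With that line added, your route is a self-contained substitute for the paper's citation-based proof.
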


\begin{proof}
From \textcite{robins1995semiparametric}, efficiency bounds under MCAR (here, $R \ind (Z,A,Y)$) are the same as those under a weaker missing at random (MAR) condition $R \ind Z \mid A,Y$, which is implied by MCAR. Therefore applying their theory (also detailed by \textcite{tsiatis2006semiparametric}) to our setting yields an efficient influence function under both MAR and MCAR given by
$$ D^*(\bO) = \frac{R}{\E(R \mid A,Y)} \Big[ D(\bO) - \E\{ D(\bO) \mid A, Y, R=1 \}  \Big] + \E\{ D(\bO) \mid A, Y, R=1\} . $$
Note that $\E(R \mid A,Y)=\E(R)$ under MCAR, and the conditioning on $R=1$ in the two rightmost terms is unnecessary (under both MCAR and MAR). Hence the efficiency bound under MCAR is given by
\begin{align*}
\var\{ D^*(\bO)\} &= \E\left[\left( \frac{R}{\E(R)} \Big[ D(\bO) - \E\{ D(\bO) \mid A, Y \}  \Big] + \E\{ D(\bO) \mid A, Y \} \right)^2 \right] \nonumber \\
&= \E\left( \frac{R}{\E(R)^2} \Big[ D(\bO) - \E\{ D(\bO) \mid A, Y \}  \Big]^2 + \E\{ D(\bO) \mid A, Y \}^2 \right) \nonumber \\
&= \frac{1}{\E(R)} \E \Big[ \var\{D(\bO) \mid A,Y\} \Big] +  \var\Big[ \E\{ D(\bO) \mid A, Y \}  \Big]  \nonumber  \\
&= \frac{\var\{D(\bO)\}}{\E(R)} - \frac{\Pb(R=0)}{\Pb(R=1)} \var\Big[ \E\{ D(\bO) \mid A, Y \}  \Big] .
\end{align*}
where the third and fourth equalities follow from $R \ind (A,Y)$ and $0=\E\{ D(\bO)\} = \E[\E\{D(\bO) \mid A,Y\}]$. 
\end{proof}

After some rearranging, the result in Lemma \ref{lem:effbd_mar} shows that the relative efficiency under MCAR versus the full data setting is
\begin{align*}
\frac{ \var\{ D^*(\bO)\} }{ \var\{ D(\bO)\}} 
&= 1 +  \text{odds}(R=0) \frac{ \E[ \var\{ D(\bO) \mid A, Y \}]}{\var\{D(\bO)\}} .
\end{align*}
Therefore full data efficiency is only attainable under MCAR in unusual no-variance situations: if $Z$ is constant within strata of $(A,Y)$, or if $(Y-\Psi^f A)$ is constant. 

\subsection{Complete-case efficiency under MCAR}

Under MCAR, a complete-case analysis that simply discards observations with $R=0$ and applies the usual estimator will be valid. Here we describe the efficiency of such an approach.

Specifically, the complete-case version of the instrumental variable estimand presented in the previous subsection is given by
\begin{equation} \label{eq:psi_cc}
\Psi^{cc}(\Pb) =  \frac{\E(Y \mid Z=1,R=1)-\E(Y \mid Z=0,R=1)}{\E(A \mid Z=1,R=1) - \E(A \mid Z=0,R=1)} 
\end{equation}
and we have
$$  \Psi^{f}(\Pb) = \Psi^{cc}(\Pb)$$
by the MCAR condition.  The complete-case estimator is then given by
$$  \Psi^{cc}(\Pn) %= \frac{\widehat\E(Y \mid Z=1,R=1)-\widehat\E(Y \mid Z=0,R=1)}{\widehat\E(A \mid Z=1,R=1) - \widehat\E(A \mid Z=0,R=1)} 
= \frac{ \Pn(RZY) \Pn(R) - \Pn(RZ) \Pn(RY) }{ \Pn(RZA) \Pn(R) - \Pn(RZ) \Pn(RA) } . $$
This estimator solves $\Pn\{ R(Z-\widehat\pi_n) (Y-\psi A) \} = 0 $ in $\psi$ for 
for $\widehat\pi_n=\Pn(Z \mid R=1)=\Pn(RZ)/\Pn(R)$, and thus using standard estimating equation results as before (together with the fact that $R \ind (Z,A,Y)$), it is straightforward to show that
$$ \Psi^{cc}(\Pn) - \Psi^f = \Pn\left\{ \frac{R}{\E(R)} D(\bO)  \right\} + o_\Pb(1/\sqrt{n}) . $$
Therefore since $R \ind (Z,A,Y)$ the asymptotic variance ($\avar$) of the complete-case estimator is
$$ \avar\left\{ \Psi^{cc}(\Pn) \right\} = \var\left\{ \frac{R}{\E(R)} D(\bO)  \right\} =  \frac{\var\{D(\bO)\}}{ \E(R) } =  \frac{\avar\left\{ \Psi^{f}(\Pn) \right\}}{ \E(R) } $$
so that the relative efficiency ${\avar\left\{ \Psi^{cc}(\Pn) \right\}}/{\avar\left\{ \Psi^{f}(\Pn) \right\}}$ is the inverse proportion of units not missing $1/\E(R)=1/\Pb(R=1)$. This is to be expected: since the complete-case estimator only uses units with $R=1$, it requires $1/\Pb(R=1)$ times as many observations to match the efficiency attainable without missingness (for example, one would need a sample size twice as large if half the units have missing data).

Further, comparison with the efficiency bound from Lemma \ref{lem:effbd_mar} shows that the complete-case estimator is generally inefficient, since its asymptotic variance (the first term on the right-hand side of \eqref{eq:bd_mar}) is strictly greater than the efficiency bound (the left-hand side of \eqref{eq:bd_mar}) unless $\E\{D(\bO) \mid A,Y\}$ is constant (hence, the second term on the right-hand side of  \eqref{eq:bd_mar} is zero). 

\section{First paradox}

Now that we have characterized the efficiency theory for instrumental variable studies under the MCAR condition $R \ind (Z,A,Y)$, we will present our first (apparent) paradox that arises when noncompliance is one-sided. Namely, we will show that, even assuming the MCAR condition, a complete-case analysis is generally biased when one-sided noncompliance is taken into account. More specifically, incorporating the fact that noncompliance is one-sided generates a new missingness indicator, for which MCAR (but not MAR) is violated. 

\subsection{Main result}

As mentioned in Section 2, one-sided noncompliance means the unencouraged with $Z=0$ do not have access to treatment, so that $Z=0$ implies $A=0$, i.e., $A_{z=0}=0$ and $A=1$ implies $Z=1$. Equivalently, in such studies there are only never-takers ($A_{z}=0$) and compliers ($A_{z}=z$), and no always-takers ($A_z=1$) or defiers ($A_z=1-z$). In this section we  show that incorporating information about one-sided noncompliance invalidates MCAR so that a complete-case analysis is biased. 

Before stating our first result we introduce some new notation. Since $A=1$ implies $Z=1$ under one-sided noncompliance, whenever $Z$ is missing for subjects with $A=1$ we can actually fill in $Z=1$. This yields a new missingness indicator, different from $R$, defined by 
\begin{equation} \label{eq:rnew}
R^\dagger=R(1-A)+A
\end{equation}
which equals the initial missingness indicator $R$ whenever $A=0$, but equals 1 whenever $A=1$ since then it is known that $Z=1$. We also define the corresponding complete-case estimand based on this updated indicator as
\begin{equation} \label{eq:psi_dag}
\Psi^\dagger(\Pb) = \frac{ \E(Y \mid Z=1,R^\dagger=1) - \E(Y \mid Z=0, R^\dagger=1) }{ \E(A \mid Z=1,R^\dagger=1) - \E(A \mid Z=0, R^\dagger=1) }
\end{equation}

The next theorem gives our first main result, which is that incorporating information about one-sided noncompliance violates MCAR and thus generally invalidates a complete-case analysis, even though without incorporating such information MCAR in fact holds and a complete-case analysis would be valid.

\begin{theorem}
Let $\bO=(R,RZ,A,Y) \sim \Pb$ and suppose the MCAR condition $R \ind (Z,A,Y)$ holds, so that
$$ \Psi^f(\Pb) = \Psi^{cc}(\Pb) $$
for $\Psi^f$ and $\Psi^{cc}$ the full-data and complete-case instrumental variable estimands in \eqref{eq:psi_full} and \eqref{eq:psi_cc}. When there is one-sided noncompliance and $\Pb(R=1)<1$, then MCAR fails for the updated indicator $R^\dagger$ from \eqref{eq:rnew} which incorporates the knowledge that $A=1$ implies $Z=1$, i.e., 
$$ R^\dagger \not\!\ind (Z,A,Y) $$
and in general a complete-case analysis based on $R^\dagger$ will be biased, i.e.,
$$ \Psi^f(\Pb)  \neq \Psi^\dagger(\Pb)  . $$
\end{theorem}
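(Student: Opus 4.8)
The plan is to prove the two conclusions in turn, beginning with the failure of MCAR for $R^\dagger$, which is the shorter claim. Since $R^\dagger=R(1-A)+A$ is a deterministic function of $(R,A)$ that equals $1$ identically whenever $A=1$, I would simply compare the conditional retention probabilities across the two treatment strata. When $A=1$ we have $R^\dagger=1$ with certainty, so $\Pb(R^\dagger=1\mid A=1)=1$; when $A=0$ we have $R^\dagger=R$, so $\Pb(R^\dagger=1\mid A=0)=\Pb(R=1\mid A=0)=\Pb(R=1)$ by MCAR. Since $\Pb(R=1)<1$ by hypothesis these two probabilities differ, so $R^\dagger$ is not independent of $A$; as $A$ is a component of $(Z,A,Y)$, this yields $R^\dagger\not\!\ind(Z,A,Y)$ at once.

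For the bias claim I would compute $\Psi^\dagger$ explicitly and compare it to $\Psi^f$. The key structural simplification comes from one-sided noncompliance in the $Z=0$ arm: since $Z=0$ forces $A=0$ and hence $R^\dagger=R$, conditioning on $\{Z=0,R^\dagger=1\}$ is identical to conditioning on $\{Z=0,R=1\}$, which by MCAR collapses to the unconditional $Z=0$ quantities. In particular $\E(A\mid Z=0,R^\dagger=1)=0$ and $\E(Y\mid Z=0,R^\dagger=1)=\E(Y\mid Z=0)$, so the entire discrepancy must originate in the $Z=1$ arm.

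The $Z=1$ arm is where the selection enters, and handling it is the crux of the argument. Writing $p=\Pb(A=1\mid Z=1)$ and $\pi=\Pb(R=1)$, I would decompose each conditional expectation over $Z=1$ by conditioning further on $A$, using that $R^\dagger=1$ automatically when $A=1$ but $R^\dagger=R$ (retained only with probability $\pi$, by MCAR) when $A=0$. This gives $\Pb(R^\dagger=1\mid Z=1)=p+(1-p)\pi$ and shows that the $A=1$ stratum is over-weighted relative to the $A=0$ stratum by a factor $1/\pi$ inside $\{Z=1,R^\dagger=1\}$. Carrying this through yields $\E(A\mid Z=1,R^\dagger=1)=p/\{p+(1-p)\pi\}$ and a correspondingly reweighted expression for $\E(Y\mid Z=1,R^\dagger=1)$, after which substituting into \eqref{eq:psi_dag} and subtracting $\Psi^f$ reduces, after routine algebra, to a closed form such as
\begin{equation*}
\Psi^f(\Pb) - \Psi^\dagger(\Pb) = \frac{\Pb(A=0\mid Z=1)\,\Pb(R=0)}{\Pb(A=1\mid Z=1)}\Big\{ \E(Y\mid Z=1,A=0) - \E(Y\mid Z=0) \Big\}.
\end{equation*}

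Finally I would argue this difference is generically nonzero. It vanishes only in the degenerate cases $\Pb(R=0)=0$ (no missingness), $\Pb(A=0\mid Z=1)=0$ (no encouraged never-takers), or when the bracketed outcome contrast is zero. The main conceptual obstacle, and the point worth emphasizing, is that this last contrast does \emph{not} vanish under the identifying assumptions: $\E(Y\mid Z=1,A=0)$ is an average over the encouraged never-takers alone, whereas $\E(Y\mid Z=0)$ averages over a mixture of never-takers and compliers, all of whom merely happen to have $A=0$ because $Z=0$. Whenever compliance type is associated with the baseline outcome $Y_{a=0}$, these two means differ, so the bias persists even with a valid instrument. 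Exhibiting a single distribution with $\E(Y\mid Z=1,A=0)\neq\E(Y\mid Z=0)$ then establishes $\Psi^f(\Pb)\neq\Psi^\dagger(\Pb)$ in general.
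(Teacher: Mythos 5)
Your proof is correct. The first half (MCAR failure via $\Pb(R^\dagger=1\mid A=1)=1$ versus $\Pb(R^\dagger=1\mid A=0)=\Pb(R=1)<1$) is exactly the paper's argument. For the bias, the paper also proceeds by explicit computation (its Lemma 1 in the Appendix), but via a different decomposition: it telescopes by adding and subtracting $\E(Y\mid Z=1)$ and expresses the bias as $\alpha\beta\,\Pb(R^\dagger=0\mid Z=1)/\E(A\mid Z=1,R^\dagger=1)$ with $\alpha=\E(A\mid Z=1,R^\dagger=1)-\E(A\mid Z=1,R^\dagger=0)$ and $\beta=\E(Y\mid Z=1,A=1)-\E(Y\mid Z=1,A=0)-\Psi^f$, whereas you parametrize by $p=\Pb(A=1\mid Z=1)$, $\pi=\Pb(R=1)$ and reweight the $A$-strata within $\{Z=1,R^\dagger=1\}$. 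The two closed forms agree: under one-sided noncompliance $\E(A\mid Z=1,R^\dagger=0)=0$, so the paper's $\alpha$ cancels against its denominator, and the identity $\E(Y\mid Z=1,A=1)-\E(Y\mid Z=1,A=0)-\Psi^f=\{\E(Y\mid Z=0)-\E(Y\mid Z=1,A=0)\}/p$ converts its $\beta\,\Pb(R^\dagger=0\mid Z=1)$ into your expression, with $\Pb(R^\dagger=0\mid Z=1)=(1-p)(1-\pi)$. Your form has the advantage of making the source of the bias transparent — a contrast between the encouraged never-takers and the unencouraged mixture of never-takers and compliers — and of making explicit the final step (exhibiting a distribution where that contrast is nonzero) that the theorem's ``in general'' claim requires; the paper's $\alpha,\beta$ form is instead tailored to stating its two no-bias conditions, one untestable and one testable. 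Your no-bias cases also match the paper's: $\Pb(A=0\mid Z=1)=0$ corresponds to a degenerate $\alpha$ condition, and $\E(Y\mid Z=1,A=0)=\E(Y\mid Z=0)$ is equivalent to $\beta=0$.
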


\begin{proof}
In the next subsection we will derive the explicit form of the bias of the complete-case estimand based on $R^\dagger$, proving the second result. To see why $R^\dagger \not\!\ind (Z,A,Y)$ for the first result, note that $R^\dagger \not\!\ind A$ due to the fact that
\begin{align*}
\Pb(R^\dagger=1 \mid A=1) &= 1  \\
\Pb(R^\dagger=1 \mid A=0) &= \Pb(R=1 \mid A=0) = \Pb(R=1) ,
\end{align*} 
and these expressions are not equal unless there is no missingness. 
\end{proof}

The result in Theorem 1 is counterintuitive at first glance: if we use the original missingness indicator $R$ and pretend not to know about one-sided noncompliance, then a complete-case analysis is valid; however if we take into account this latter information using $R^\dagger$, then we cannot do a complete-case analysis anymore. The main intuition for this result is that, even if the instrument is initially missing completely at random (using the indicator $R$), incorporating the one-sided noncompliance information induces a dependence between the new indicator $R^\dagger$ and treatment $A$, since $A=1$ implies $Z=1$ and thus $R^\dagger=1$. This invalidates MCAR and instead makes the instrument missing at random (but not completely).

\begin{remark}
In a setting with baseline covariates $\bX$, the results of Theorem 1 indicate that a $\bX$-specific MAR condition is violated for the indicator $R^\dagger$, i.e., $R^\dagger \not\!\ind (Z,A,Y) \mid \bX$, even when it holds for the original indicator $R$, i.e., even if $R \ind (Z,A,Y) \mid \bX$. 
\end{remark}

\begin{remark}
Although MCAR is violated when using $R^\dagger$, MAR still holds; more specifically we have $R^\dagger \ind (Z,Y) \mid A$. This follows since if $A=1$ then $R^\dagger=1$ is constant, and if $A=0$ then $R^\dagger=R$, which is independent of $(Z,Y)$. 
\end{remark}

Theorem 1 therefore tells us that, in studies with one-sided noncompliance and missing instrument values, one should either not use complete-case analyses (or analogous methods in settings with covariates) or else one should not incorporate the one-sided noncompliance information. In our view it is preferable to assume minimal untestable MAR conditions like $R \ind Z \mid (A,Y)$, since this holds for both $R$ and $R^\dagger$, instead of stronger MCAR conditions like $R \ind (Z,A,Y)$, which incorporate testable constraints ($R \ind (A,Y)$) and require different analysis methods depending on whether one-sided noncompliance information is taken into account. This is especially prudent since efficiency bounds are the same under both MAR and MCAR; thus the benefits of MCAR assumptions do not include efficiency gains, but in fact just less computational burden (e.g., allowing complete-case analysis). 

Thus, we advocate for an estimator that can attain the MCAR/MAR efficiency bound under weak conditions. One such estimator is given by 
\begin{equation} \label{eq:psihat_eff}
 \widehat\Psi_n = \frac{ \Pn( [\frac{R^\dagger}{\widehat\E(R^\dagger \mid A,Y)} \{ Z - \widehat\E(Z \mid A,Y,R^\dagger=1) \} + \widehat\E(Z \mid A,Y,R^\dagger=1)]\{ Y - \Pn(Y) \} ) }{ \Pn( [\frac{R^\dagger}{\widehat\E(R^\dagger \mid A,Y)} \{ Z - \widehat\E(Z \mid A,Y,R^\dagger=1) \} + \widehat\E(Z \mid A,Y,R^\dagger=1)]\{ A - \Pn(A) \} ) }
\end{equation}
where $\widehat\E(R^\dagger \mid A,Y)$ and $\widehat\E(Z \mid A,Y,R^\dagger=1)$ are estimates of the  missingness propensity score and instrument regression, respectively. For discrete $Y$, these nuisance functions can be based on saturated models, in which case a standard analysis shows that the estimator attains the efficiency bound under no conditions as long as $\E(R^\dagger \mid A,Y)$ and its estimator are bounded away from zero. For continuous $Y$, construction of these estimators would require some smoothing, and a sufficient condition for asymptotic efficiency would be that the estimators achieve faster than $n^{-1/4}$ rates (e.g., in $L_2$ norm). The estimator $\widehat\Psi_n$ can be viewed as solving an estimating equation based on an estimated version of the efficient influence function in \eqref{eq:eif_mar} (with $R^\dagger$ replacing $R$). This is a standard way to construct efficient estimators, and the analysis of corresponding asymptotic properties follows from usual estimating equation techniques \autocite{van2000asymptotic,van2003unified}.

\subsection{Explicit form for bias}

In this subsection we explore the form of the bias of a complete-case approach using the indicator $R^\dagger$, in the presence of one-sided noncompliance. 

\begin{lemma}
Under MCAR ($R \ind (Z,A,Y)$) the bias of $\Psi^\dagger$ from \eqref{eq:psi_dag} is given by
$$ \Psi^\dagger(\Pb)  - \Psi^f(\Pb)  =  (\alpha \beta) \frac{ \Pb(R^\dagger=0 \mid Z=1) }{ \E(A \mid Z=1, R^\dagger=1)} $$
for
\begin{align*}
\alpha &= \E(A \mid Z=1,R^\dagger=1) - \E(A \mid Z=1,R^\dagger=0) \\
\beta &= \E(Y \mid Z=1, A=1) - \E(Y \mid Z=1, A=0) - \Psi^f(\Pb)  .
\end{align*}
\end{lemma}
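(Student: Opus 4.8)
The plan is to show that under one-sided noncompliance and MCAR both estimands collapse to simple ratios that share a common $Z=0$ term, and then to form the difference of the two ratios directly. First I would record four consequences of the two assumptions, using $R^\dagger = R(1-A)+A$ from \eqref{eq:rnew}. Since $Z=0$ forces $A=0$ and hence $R^\dagger=R$, one-sided noncompliance gives $\E(A \mid Z=0, R^\dagger=1) = \E(A \mid Z=0) = 0$, while MCAR gives $\E(Y \mid Z=0, R^\dagger=1) = \E(Y \mid Z=0)$; thus the $Z=0$ contributions to $\Psi^f$ and $\Psi^\dagger$ coincide and both denominators lose their $Z=0$ piece. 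Next, within $Z=1$ the event $\{R^\dagger=0\}$ equals $\{A=0, R=0\}$ (because $A=1$ forces $R^\dagger=1$), so $\E(A \mid Z=1, R^\dagger=0)=0$, whence $\alpha = \E(A \mid Z=1, R^\dagger=1)$; and MCAR lets me drop the $R=0$ conditioning to obtain $\E(Y \mid Z=1, R^\dagger=0) = \E(Y \mid Z=1, A=0)$.

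With these reductions in hand, write $\pi^\dagger := \Pb(R^\dagger=1 \mid Z=1)$. Iterated expectation within $Z=1$ then gives $\E(A \mid Z=1) = \pi^\dagger\,\E(A \mid Z=1, R^\dagger=1)$ and $\E(Y \mid Z=1) = \pi^\dagger\,\E(Y \mid Z=1, R^\dagger=1) + (1-\pi^\dagger)\,\E(Y \mid Z=1, A=0)$. Substituting these into $\Psi^f$ from \eqref{eq:psi_full} and comparing with $\Psi^\dagger$ from \eqref{eq:psi_dag}, both are ratios built from the same quantities, so I would place $\Psi^\dagger - \Psi^f$ over the common denominator $\pi^\dagger\,\E(A \mid Z=1, R^\dagger=1)$. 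The terms involving $\E(Y \mid Z=1, R^\dagger=1)$ cancel, and the numerator collapses to $(1-\pi^\dagger)\{\E(Y \mid Z=0) - \E(Y \mid Z=1, A=0)\}$.

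The final step connects this to $\beta$. Splitting $\E(Y \mid Z=1)$ on the binary $A$ and invoking the definition of $\Psi^f$ yields the identity $\E(Y \mid Z=0) - \E(Y \mid Z=1, A=0) = \E(A \mid Z=1)\,\beta$, and since $\E(A \mid Z=1) = \pi^\dagger\,\E(A \mid Z=1, R^\dagger=1)$ this factor cancels the common denominator, leaving $\Psi^\dagger - \Psi^f = (1-\pi^\dagger)\beta$. Because $\alpha = \E(A \mid Z=1, R^\dagger=1)$ and $1-\pi^\dagger = \Pb(R^\dagger=0 \mid Z=1)$, this equals $\alpha\beta\,\Pb(R^\dagger=0 \mid Z=1)/\E(A \mid Z=1, R^\dagger=1)$, as claimed. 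I expect the only real obstacle to be bookkeeping: keeping the two assumptions' roles straight (one-sided noncompliance kills the $Z=0$ and $R^\dagger=0$ treatment means, MCAR removes the $R$-conditioning from the outcome means) and checking that the $\beta$ identity, which is essentially just a rewriting of the definition of $\Psi^f$, lines up so the $\E(A \mid Z=1, R^\dagger=1)$ factors cancel cleanly.
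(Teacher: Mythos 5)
Your proof is correct, and it reaches the stated formula by a genuinely different algebraic route than the paper's. The paper keeps $\E(A \mid Z=1, R^\dagger=0)$ as an unevaluated quantity throughout: it writes $\Psi^\dagger - \Psi^f$ in terms of the two ``selection'' gaps $\E(Y \mid Z=1, R^\dagger=1) - \E(Y \mid Z=1)$ and $\E(A \mid Z=1, R^\dagger=1) - \E(A \mid Z=1)$, factors out $\beta$, and only at the last step converts the second gap into $\alpha\,\Pb(R^\dagger=0 \mid Z=1)$ via the law of total expectation over $R^\dagger$. You instead exploit at the outset that $\{R^\dagger=0\} = \{A=0, R=0\}$, so that $\E(A \mid Z=1, R^\dagger=0)=0$ and $\alpha = \E(A \mid Z=1, R^\dagger=1)$, and then decompose $\E(Y \mid Z=1)$ and $\E(A \mid Z=1)$ over $R^\dagger$ to collapse the bias to $\beta\,\Pb(R^\dagger=0 \mid Z=1)$; the lemma's displayed formula then follows because the $\alpha$ and $\E(A \mid Z=1, R^\dagger=1)$ factors cancel. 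I checked your key identity $\E(Y \mid Z=0) - \E(Y \mid Z=1, A=0) = \E(A \mid Z=1)\,\beta$ (split $\E(Y \mid Z=1)$ on the binary $A$ and use $\E(A \mid Z=1)\,\Psi^f = \E(Y \mid Z=1) - \E(Y \mid Z=0)$), and it holds. What the paper's route buys is that the factor $\alpha$ arises intrinsically as a compliance-rate gap between observed and missing strata, so the formula reads naturally as the product of the two interpretable ``no-bias'' quantities discussed after the lemma, and the derivation would survive in settings where $\E(A \mid Z=1, R^\dagger=0)$ need not vanish. What your route buys is the sharper observation that under one-sided noncompliance the bias is simply $\beta\,\Pb(R^\dagger=0 \mid Z=1)$, i.e., the $\alpha$ factor in the displayed formula is cosmetic once $R^\dagger$ is used (since $\alpha=0$ would force the denominator of $\Psi^\dagger$ to vanish), which somewhat reframes the paper's two no-bias scenarios.
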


A proof of Lemma 1 is given in the Appendix. This result indicates that as long as there is missingness, i.e., $\Pb(R^\dagger=0 \mid Z=1) >0$, then there are two scenarios in studies with one-sided noncompliance in which it is valid to do a complete-case analysis (using the indicator $R^\dagger$):
\begin{enumerate}
\item Among those encouraged towards treatment ($Z=1$), compliance ($A=1$) rates are the same regardless of whether the instrument is missing or not (i.e., $\alpha=0$).
\item The mean treatment-control difference in outcomes among those encouraged towards treatment ($Z=1$) is equal to the instrumental variable effect $\Psi^f$ (i.e., $\beta=0$). 
\end{enumerate} 

In general these two no-bias conditions would not be expected to hold. The first condition (involving $\alpha$) cannot be tested since $Z$ is unobserved when $R^\dagger=0$. However the second condition could be assessed (assuming MAR) by estimating $\Psi^f$ under MAR and comparing to the mean treatment-control difference among those encouraged towards treatment with $R^\dagger=1$. Alternatively, to assess bias apart from these conditions, one could simply compare estimates under MCAR and MAR directly.

\section{Second paradox}

In this section we present our second apparent paradox: that, although knowing noncompliance is one-sided does not yield efficiency gains with full non-missing data, such knowledge is in fact informative in settings with missing data. In particular we characterize the change in the efficiency bound that results from exploiting knowledge of one-sided compliance.

\subsection{Efficiency bound without missingness}

Recall from Section 3.1 that the full data efficient influence function is $D(\bO)$ from \eqref{eq:eif_full}, which is also the influence function for the usual IV estimator $\Psi^f(\Pn)$ from \eqref{eq:psihat}. This can be shown to yield the full data efficiency bound
$$ \var\{ D(\bO) \} = \frac{ \sum_z \var(Y - \Psi^f A \mid Z=z)/ \Pb(Z=z) }{ \{\E(A \mid Z=1) - \E(A \mid Z=0)\}^2 } . $$

Now suppose a known compliance rate $\E(A \mid Z=0)=\rho$  among the unencouraged, for some known value $0 \leq \rho < 1-\epsilon$. For example $\rho=0$ corresponds to the one-sided noncompliance scenario in which $Z=0$ implies $A=0$. Then we can define the corresponding estimand
$$ \Psi^f_\rho(\Pb) = \frac{\E(Y \mid Z=1)-\E(Y \mid Z=0)}{\E(A \mid Z=1) - \rho} $$
with efficient estimator 
$$ \Psi^f_\rho(\Pn) = \frac{\Pn(Y \mid Z=1)-\Pn(Y \mid Z=0)}{\Pn(A \mid Z=1) - \rho} = \left[ \frac{1}{\Psi^f(\Pn)} + \frac{\Pn\{(A-\rho)(1-Z) \Pn(Z) \}}{\Pn(ZY) - \Pn(Z) \Pn(Y} \right]^{-1} . $$
This estimator $\Psi^f_\rho(\Pn)$ can be shown to solve the equation
$$ \Pn \Big( (Z-\widehat\pi_n) \Big[Y - \psi) \{ZA + (1-Z)\rho\} \Big] \Big) = 0 $$
in $\psi$ for $\widehat\pi_n=\Pn(Z)$, rather than the equation $\Pn\{(Z-\widehat\pi_n)(Y - \psi A)\}=0$ solved by the standard estimator $\Psi^f(\Pn)$. Although not the case for general $\rho>0$, when $\rho=0$ exactly we have $\Psi^f_\rho(\Pn) = \Psi^f(\Pn)$, so that there is no efficiency gain from incorporating knowledge of one-sided noncompliance. This follows since $ZA+(1-Z)\rho=ZA=A$ when $\rho=0$, so that incorporating the knowledge that $\rho=0$ yields the exact same estimator as if the knowledge was not used. This is to be expected, since if $\E(A \mid Z=0)=0$ exactly then the estimator $\Pn(A \mid Z=0)=\Pn\{A(1-Z)\} / \Pn(1-Z)$ will equal the true parameter $\rho=0$ with probability one (of course the converse is not true: if $\Pn(A \mid Z=0)=0$ we cannot be certain that $\rho=0$).

Therefore, when there is no missing data, incorporating knowledge of one-sided noncompliance (or, in other words, $\rho=0$) simply yields the same standard estimator that does not incorporate this knowledge, and hence does not give any efficiency gains. However, in the next subsection we will show that the story is different when there is missing data (and MCAR holds for the original missingness indicator $R$): then using such knowledge can in fact provide efficiency gains. 

\subsection{Main result}

Here we show our second apparently paradoxical result: that knowledge of one-sided noncompliance gives opportunities for efficiency gains when instrument values are missing, contrary to the setting without such missing data. This follows because the knowledge that $A=1$ implies $Z=1$ allows us to fill in some missing data (among those with $A=1$).

\begin{theorem}
Assume the MCAR condition $R \ind (Z,A,Y)$. Then the efficiency bound $\var\{D^*(\bO)\}$ for estimating $\Psi^f$ is no less than the bound $\var\{ D_{os}(\bO)\}$ when it is also known that noncompliance is one-sided. In particular the difference between the bounds is
$$ \var\{ D^*(\bO)\} - \var\{ D_{os}(\bO)\} = \text{odds}(R=0) \E\Big[ \var\{ D(\bO) \mid A, Y \} \Bigm| A=1 \Big] \Pb(A=1)  \geq 0 . $$
\end{theorem}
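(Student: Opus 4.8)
The plan is to identify $D_{os}(\bO)$ as the efficient influence function for $\Psi^f$ in the missing-data model that additionally encodes one-sided noncompliance, and then to subtract its variance from the expression for $\var\{D^*(\bO)\}$ already recorded in the proof of Lemma \ref{lem:effbd_mar}. The starting observation is that, as shown in Section 5.1, knowledge of one-sided noncompliance does not alter the full-data efficient influence function $D(\bO)$: with complete data the estimator incorporating $\rho=0$ coincides with the usual one. Hence the only effect of this knowledge in the missing-data setting is that the instrument becomes observable for every unit with $A=1$, so that the effective missingness indicator is $R^\dagger=R(1-A)+A$ rather than $R$. By Remark 2 this indicator satisfies the MAR condition $R^\dagger \ind (Z,Y) \mid A$ (and in particular $R^\dagger \ind Z \mid A,Y$), so the Robins--Rotnitzky--Zhao theory invoked in Lemma \ref{lem:effbd_mar} applies verbatim with $R^\dagger$ in place of $R$, giving
$$ D_{os}(\bO) = \frac{R^\dagger}{\E(R^\dagger \mid A,Y)}\Big[ D(\bO) - \E\{D(\bO)\mid A,Y\}\Big] + \E\{D(\bO)\mid A,Y\} $$
as the efficient influence function, where the $R^\dagger=1$ conditioning drops by MAR. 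The crucial difference from the $D^*$ calculation is that MCAR \emph{fails} for $R^\dagger$, so $\E(R^\dagger\mid A,Y)$ can no longer be pulled outside the expectation as the constant $\E(R)$.

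Next I would repeat the variance computation from the proof of Lemma \ref{lem:effbd_mar}, but keeping $\E(R^\dagger\mid A,Y)$ inside. Using that $\E\{D(\bO)\mid A,Y\}$ has mean zero, that $R^\dagger \ind Z \mid A,Y$, and that $(R^\dagger)^2 = R^\dagger$, the cross term vanishes and one obtains
$$ \var\{D_{os}(\bO)\} = \E\left[ \frac{\var\{D(\bO)\mid A,Y\}}{\E(R^\dagger\mid A,Y)} \right] + \var\Big[\E\{D(\bO)\mid A,Y\}\Big] , $$
which for a constant propensity $\E(R^\dagger\mid A,Y)\equiv \E(R)$ reduces exactly to the expression derived for $\var\{D^*(\bO)\}$ in the proof of Lemma \ref{lem:effbd_mar}.

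The decisive step is to evaluate $\E(R^\dagger\mid A,Y)$. Since $R^\dagger=1$ whenever $A=1$, while $R^\dagger=R$ whenever $A=0$ and $R\ind(A,Y)$ under MCAR, we get $\E(R^\dagger\mid A,Y)=1$ on $\{A=1\}$ and $\E(R^\dagger\mid A,Y)=\E(R)$ on $\{A=0\}$. Subtracting the two variance formulas, the $\var[\E\{D(\bO)\mid A,Y\}]$ terms cancel, and on $\{A=0\}$ the integrands coincide and cancel as well; only the $\{A=1\}$ contribution survives, carrying the factor $1/\E(R)-1=\text{odds}(R=0)$. This yields
$$ \var\{D^*(\bO)\} - \var\{D_{os}(\bO)\} = \text{odds}(R=0)\,\E\big[\var\{D(\bO)\mid A,Y\}\,\one(A=1)\big] , $$
and writing $\E[\,\cdot\,\one(A=1)] = \E[\,\cdot\mid A=1]\,\Pb(A=1)$ produces the stated identity; nonnegativity is immediate since each factor is nonnegative.

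The main obstacle I anticipate is the first step: justifying rigorously that augmenting the model with one-sided noncompliance is equivalent to the MAR model with indicator $R^\dagger$. In particular, one must argue that the full-data tangent space and efficient influence function $D(\bO)$ are unchanged by the constraint $A=1\Rightarrow Z=1$ (so that only the coarsening mechanism changes), and that continuing to observe the raw $R$ among units with $A=1$ contributes no information about $\Psi^f$, precisely because $R\ind(Z,A,Y)$. Once this reduction to the MAR-$R^\dagger$ model is in place, the remainder is a direct parallel of the computation in Lemma \ref{lem:effbd_mar}.
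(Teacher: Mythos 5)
Your proposal is correct and follows essentially the same route as the paper: identify $D_{os}$ as the efficient influence function for the MAR-$R^\dagger$ model via Remark 2 and the Robins--Rotnitzky--Zhao theory, compute its variance as $\E[\var\{D\mid A,Y\}/\E(R^\dagger\mid A,Y)]+\var[\E\{D\mid A,Y\}]$, and subtract from the analogous expression for $\var\{D^*\}$; your direct pointwise comparison of the propensities ($1$ on $\{A=1\}$, $\E(R)$ on $\{A=0\}$) is just a minor algebraic variant of the paper's rearrangement through $\var\{D\}+\text{odds}(R=0)\E[(1-A)\var\{D\mid A,Y\}]$.
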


\begin{proof}

As noted in Remark 2 we have that $R \ind (Z,A,Y)$ implies $R^\dagger \ind (Z,Y) \mid A$. Therefore, as in Section 3.3  the efficiency bounds are equivalent under $R^\dagger \ind (Z,Y) \mid A$ and $R^\dagger \ind Z \mid (A,Y)$ \autocite{robins1995semiparametric}. This implies the efficient influence function under $R \ind (Z,A,Y)$ with known one-sided noncompliance is given by
\begin{align*}
D_{os}(\bO) &= \frac{R^\dagger}{\E(R^\dagger \mid A,Y)} \Big[ D(\bO) - \E\{ D(\bO) \mid A, Y, R^\dagger=1 \}  \Big] + \E\{ D(\bO) \mid A, Y, R^\dagger=1\} \\
&= \frac{R^\dagger}{\E(R^\dagger \mid A)} \Big[ D(\bO) - \E\{ D(\bO) \mid A, Y \}  \Big] + \E\{ D(\bO) \mid A, Y \}
\end{align*}
where we note that $\E\{D(\bO) \mid A, Y, R^\dagger = 1\} = \E\{D(\bO) \mid A, Y \}$ follows from the fact that $R^\dagger \ind Z \mid (A,Y)$, which is implied by $R^\dagger \ind (Z,Y) \mid A$. Therefore the variance of the efficient influence function (which by definition is the efficiency bound) is 
\begin{align*}
\var\{ D_{os}(\bO)\} 
&= \E\left( \frac{R^\dagger}{\E(R^\dagger \mid A)^2} \Big[ D(\bO) - \E\{ D(\bO) \mid A, Y \}  \Big]^2 + \E\{ D(\bO) \mid A, Y\}^2 \right) \\
&=  \E\left[ \frac{\var\{ D(\bO) \mid A, Y \}}{\E(R^\dagger \mid A)}  \right] + \var\Big[ \E\{ D(\bO) \mid A, Y\}\Big]  \\
&= \var\{D(\bO)\} +  \E\left[ \frac{\Pb(R^\dagger=0 \mid A)}{\Pb(R^\dagger=1 \mid A)} \var\{ D(\bO) \mid A, Y \} \right] \\
&= \var\{D(\bO)\} + \text{odds}(R=0) \E\Big[ (1-A) \var\{ D(\bO) \mid A, Y \} \Big] . 
\end{align*}
The first equality follows since the cross term in the variance is zero by iterated expectation, the second by iterated expectation and since $\E[\E\{D(\bO) \mid A,Y\}]=0$, the third using the law of total variance, and the fourth using the definition of $R^\dagger$. Rearranging the last line yields the result.
\end{proof}

Theorem 2 shows that, in general, efficiency gains are possible when $Z$ is partially missing by incorporating knowledge of one-sided noncompliance. The only ways the bounds $\var\{ D^*(\bO)\}$ and $\var\{ D_{os}(\bO)\}$ can be equal are in the two unusual no-variance situations mentioned in Section 3.3, or if no one is treated (i.e., there are only never-takers). This is different from the setting without missing data, where such efficiency gains are not possible; there knowledge of one-sided noncompliance leads to the exact same estimators that would be used without such knowledge. The intuition behind the gain in efficiency comes from the fact that knowledge of one-sided noncompliance allows us to fill in some missing data; in particular we know that those who are treated must have had missing instruments equal to $Z=1$, since those with $Z=0$ cannot access treatment. This also explains why the increase in efficiency is proportional to the fraction treated $\Pb(A=1)$.

\section{Discussion}

In this paper we discussed two paradoxes related to bias and efficiency in instrument variable studies, in the common setting where instrument values are partially missing, and noncompliance is one-sided. Our first paradox is that complete-case analyses are biased, even when values are missing completely at random, if knowledge of one-sided noncompliance is taken into account. This is because incorporating one-sided noncompliance information (by filling in $Z=1$ whenever $A=1$) induces a dependence between missingness and treatment, invalidating MCAR and thus a complete-case analysis. Our second paradox is that incorporating information about one-sided noncompliance generally also leads to efficiency gains (i.e., the efficiency bound changes) when there is missing data, but not when no data are missing. This second result is due to the fact that the one-sided noncompliance allows us to fill in some missing values (again since we know $Z=1$ whenever $A=1$), thus alleviating some of the information loss due to missing instrument values. 

We hope our work might spur more research on missingness in instrumental variable designs, since we have shown that some interesting and unexpected phenomena can arise. More generally, in the same spirit as \textcite{van2003unified}, we also hope our paper becomes part of a larger stream of work that gives a more connected and simultaneous treatment of causal inference and missing data problems, since issues related to both often occur together in practice.

\stepcounter{section}
\printbibliography[title={\thesection \ \ \ References}]

\section{Appendix}

\subsection{Proof of Lemma 1}

First note that the true IV estimand simplifies to
$$ \Psi^f =  \frac{\E(Y \mid Z=1)-\E(Y \mid Z=0)}{\E(A \mid Z=1) } $$
since $Z=0$ implies $A=0$. This together with $R^\dagger \ind (Z,Y) \mid A$ also implies
$$ \E(Y \mid Z=0, R^\dagger=1) = \sum_a \E(Y \mid Z=0, A=a) \Pb(A=a \mid Z=0, R^\dagger=1)  = \E(Y \mid Z=0) , $$
so the complete-case estimand based on $R^\dagger$ is similarly given by
$$ \Psi^\dagger = \frac{\E(Y \mid Z=1,R^\dagger=1)-\E(Y \mid Z=0)}{\E(A \mid Z=1,R^\dagger=1) } . $$

Therefore the bias of $\Psi^\dagger$ is 
\begin{align*}
\Psi^\dagger - \Psi^f &= \frac{\E(Y \mid Z=1,R^\dagger=1)-\E(Y \mid Z=0)}{\E(A \mid Z=1,R^\dagger=1) } - \frac{\E(Y \mid Z=1)-\E(Y \mid Z=0)}{\E(A \mid Z=1) }  \\
&= \frac{\{ \E(Y \mid Z=1,R^\dagger=1)-\E(Y \mid Z=0) \} \E(A \mid Z=1) }{\E(A \mid Z=1,R^\dagger=1) \E(A \mid Z=1) }  \\
& \hspace{.5in} - \frac{ \{ \E(Y \mid Z=1)-\E(Y \mid Z=0) \} \E(A \mid Z=1, R^\dagger=1) }{ \E(A \mid Z=1,R^\dagger=1) \E(A \mid Z=1) }  \\
&= \frac{ \E(Y \mid Z=1, R^\dagger=1) - \E(Y \mid Z=1) }{\E(A \mid Z=1, R^\dagger=1)} + \Psi^f \left\{ \frac{ \E(A \mid Z=1)}{ \E(A \mid Z=1, R^\dagger=1)} - 1 \right\} \\
&= \Big\{ \E(Y \mid Z=1, A=1) - \E(Y \mid Z=1, A=0) \Big\} \frac{ \E(A \mid Z=1,R^\dagger=1) - \E(A \mid Z=1) }{ \E(A \mid Z=1, R^\dagger=1)}  \\
& \hspace{.5in} + \Psi^f \left\{ \frac{ \E(A \mid Z=1)}{ \E(A \mid Z=1, R^\dagger=1)} - 1 \right\} \\
&= \Big\{ \E(Y \mid Z=1, A=1) - \E(Y \mid Z=1, A=0) - \Psi^f \Big\} \\
& \hspace{.5in} \times  \frac{ \E(A \mid Z=1,R^\dagger=1) - \E(A \mid Z=1) }{ \E(A \mid Z=1, R^\dagger=1)} \\
&= \Big\{ \E(Y \mid Z=1, A=1) - \E(Y \mid Z=1, A=0) - \Psi^f \Big\}  \\
& \hspace{.5in} \times \Big\{ \E(A \mid Z=1,R^\dagger=1) - \E(A \mid Z=1,R^\dagger=0) \Big\} \times \left\{ \frac{ \Pb(R^\dagger=0 \mid Z=1) }{ \E(A \mid Z=1, R^\dagger=1)} \right\} 
\end{align*}
where the first equality follows by expressions for $\Psi^f$ and $\Psi^\dagger$ above, the second by rearranging, the third by subtracting and adding $\E(Y \mid Z=1)$ in the numerator of the first term, the fourth since
\begin{align*}
\E(Y &\mid Z=1, R^\dagger=1) - \E(Y \mid Z=1) \\
&= \Big\{ \E(Y \mid Z=1, A=1) - \E(Y \mid Z=1, A=0) \Big\} \E(A \mid Z=1,R^\dagger=1) \\
& \hspace{.4in} - \Big\{ \E(Y \mid Z=1, A=1) - \E(Y \mid Z=1, A=0) \Big\} \E(A \mid Z=1) \\
& \hspace{.4in} + \E(Y \mid Z=1, A=0)  - \E(Y \mid Z=1, A=0) \\
&= \Big\{ \E(Y \mid Z=1, A=1) - \E(Y \mid Z=1, A=0) \Big\}  \Big\{ \E(A \mid Z=1, R^\dagger=1) - \E(A \mid Z=1) \Big\} ,
\end{align*}
the fifth by rearranging,  and the sixth since
\begin{align*}
\E(A \mid Z=1) &= \Big\{ \E(A \mid Z=1, R^\dagger=1) - \E(A \mid Z=1, R^\dagger=0) \Big\} \E(R^\dagger \mid Z=1) \\
& \hspace{.4in} + \E(A \mid Z=1, R^\dagger=0) 
\end{align*}
and rearranging. This yields the desired result.

\end{document}